\newtheorem{defn}{Definition}
\newtheorem{theorem}{Theorem}
\newtheorem{lemma}{Lemma}
\newtheorem{con}{Construction}
\newtheorem{example}{Example}
\newcommand{\supp}{\text{supp}}
\newcommand{\mdef}{\ensuremath{\overset{\mathrm{def}}{=}}}
\renewcommand{\a}{\ensuremath{\mathbf{a}}}
\renewcommand{\c}{\ensuremath{\mathbf{c}}}
\renewcommand{\d}{\ensuremath{\mathbf{d}}}
\newcommand{\p}{\ensuremath{\mathbf{p}}}
\renewcommand{\u}{\ensuremath{\mathbf{u}}}
\newcommand{\x}{\ensuremath{\mathbf{x}}}
\newcommand{\y}{\ensuremath{\mathbf{y}}}
\newcommand{\z}{\ensuremath{\mathbf{z}}}
\newcommand{\w}{\ensuremath{\mathbf{w}}}
\newcommand{\C}{\ensuremath{\mathcal{C}}}
\newcommand{\D}{\ensuremath{\mathcal{D}}}
\newcommand{\Z}{\ensuremath{\mathbb{Z}}}
\newcommand{\0}{\ensuremath{\mathbf{0}}}
\newcommand{\1}{\ensuremath{\mathbf{1}}}
\begin{document}
\title{Optimal Codes Correcting a Burst of Deletions\\ of Variable Length} 


\author{%
  \IEEEauthorblockN{{\bf Andreas Lenz}, {\bf Nikita Polyanskii}}
  \IEEEauthorblockA{Technical University of Munich\\
                    Institute for Communications Engineering\\
                    DE-80333 Munich, Germany\\
                    Email: andreas.lenz@mytum.de, nikita.polianskii@tum.de}
\thanks{This work is funded by the European Research Council under the EU’s Horizon 2020 research and innovation programme (grant No. 801434).}
}


\maketitle

\begin{abstract}

	In this paper, we present an efficiently encodable and decodable code construction that is capable of correction a burst of deletions of length at most $k$. The redundancy of this code is $\log n + k(k+1)/2\log \log n+c_k$ for some constant $c_k$ that only depends on $k$ and thus is scaling-optimal. The code can be split into two main components. First, we impose a constraint that allows to locate the burst of deletions up to an interval of size roughly $\log n$. Then, with the knowledge of the approximate location of the burst, we use several {shifted Varshamov-Tenengolts} codes to correct the burst of deletions, which only requires a small amount of redundancy since the location is already known up to an interval of small size. Finally, we show how to efficiently encode and decode the code.
\end{abstract}

\section{Introduction}
Burst deletions and insertions are a class of errors that can be found in a variety of applications, ranging from modern data storage systems, e.g., DNA-based data storage over communication systems to file synchronization. In contrast to classical deletion and insertions errors, that delete and insert symbols into a string at arbitrary positions, burst errors occur at consecutive positions. 

The study of classical deletion correcting codes goes back to the work of Levenshtein \cite{Lev65}, where he established that any code that corrects $k$ deletions if and only if it can correct $k$ insertions and deletions and showed that the {Varshamov-Tenengolts (VT) codes} \cite{varshamov_1965} are capable of correcting a single insertion or deletion. The VT codes have later been extended to non-binary alphabets \cite{tenengolts_nonbinary_1984} and also to the case of multiple deletions \cite{helberg_multiple_2002}. However, the latter construction already exhibits a redundancy that is linear in $n$ for any $k \geq 2$. Motivated by the fact that markers inside codewords allow for an easier detection of deletion errors Brakensiek et al. \cite{brakensiek2017efficient} used implicit markers, respectively \emph{patterns}, as boundaries for the symbols of several outer codes, resulting in a code of redundancy $O(k^2 \log k \log n)$. More recently, Sima et al. \cite{sima2019optimal} refined this technique by protecting a pattern-indicator vector from errors. By exploiting the fact that this vector is sparse, i.e., contains few $1's$, and using a single outer code, they were able to construct a code of redundancy $O(k \log n)$. In the context of document exchange Cheng et al. \cite{cheng_deterministic_2018} proposed another family of $k$-deletion-correcting codes with redundancy $O(k \log n)$, and in \cite{haeupler_optimal_2019} a $k$-deletion-correcting code with redundancy $O(k \log^2(n/k))$ has been found. Further works focus on the case, where the number of deletions scale with the codeword length \cite{guruswami_polynomial_2019} and on larger alphabet sizes \cite{haeupler_synchronization_2017}. For a detailed review of deletion-correcting codes, we refer to the surveys of Mitzenmacher \cite{mitzenmacher_survey_2008} and Mercier \cite{mercier_survey_2010}.

A related type of errors are tandem duplications, an object of study in the field of biological information theory. Here, a block of $k$ symbols inside a string are duplicated and inserted right after the occurrence of the original block. They have been studied in different contexts, including entropy rates \cite{farnoud_capacity_2016}, zero-error capacity \cite{kovacevic_zero-error_2019}, and error-correcting codes \cite{JFSB16, lenz_duplication-correcting_2019, kovacevic_asymptotically_2018}. Clearly, tandem duplications form a special sort of burst insertion errors, however, since burst insertions are a more general type of error, it is not possible to use tandem-duplication-correcting codes for burst insertions.


The concept of burst-deletion-correcting codes has been introduced by Levenshtein~\cite{levenshtein1967asymptotically}, where he presented a construction that can correct a burst of deletions of size at most $2$ with optimal redundancy\footnote{We say that the redundancy of a code construction is optimal if, for a fixed burst-length $k$, the ratio between the redundancy of the construction and some lower bound, e.g., sphere-packing bound, on the redundancy approaches $1$ as the code length goes to infinity.}. Note that there is an important difference between codes that can correct a burst of length \emph{at most} $k$ and a burst of length \emph{exactly} $k$, as a code of the earlier type can correct errors of the latter, but the converse is not true in general. For the latter, in \cite{cheng_codes_2014}, a construction with redundancy $k(\log(n/k+1))$ has been found. This construction has been improved in \cite{schoeny2017codes} to an optimal redundancy of $\log(n) + (k - 1) \log(\log(n)) + k - \log(k)$, while its non-binary generalization has been discussed in \cite{saeki_improvement_2018}. In the same paper \cite{schoeny2017codes}, a code using several VT constraints that can correct a burst of at most $k$ deletions with redundancy $(k-1)\log n + \left(k(k+1)/2-1\right) \log \log n +c_k'$, for some constant $c_k'$ that only depends on $k$, has been presented. The construction from \cite{schoeny2017codes} has been further improved to a redundancy of $\lceil\log k \rceil\log n +\left(k(k+1)/2-1\right) \log \log n +c_k''$ in~\cite{gabrys2017codes} by reusing some of the VT constraints. However, a sphere-packing argument shows that a redundancy of only roughly at least $\log n$ is required and thus the intriguing question of finding optimal codes for the case, where the length of the burst is at most $k$, still remained open.

The main contribution of this paper is deriving a construction of codes that can correct a burst of deletions of length at most $k$ with optimal redundancy
$
\log n + \frac{k(k+1)}{2} \log\log n+ c_k.
$
for a constant $c_k$ that only depends on $k$. Note that the value of $c_k$ will be given explicitly in the proof of Theorem \ref{th::one-burst-deletion code}. The remainder of the paper is organized as follows. We introduce notations and important definitions in Section~\ref{s::preliminaries}. An optimal code correcting one burst of deletions is presented in Section~\ref{s::simple construction}. Efficient encoding and decoding algorithms for another optimal construction are provided in Section~\ref{s::efficient construction}.
\section{Preliminaries}\label{s::preliminaries}
We start by introducing some notation that is used throughout the paper. Let $\mathbb{Z}$ be the set of integers and $[n]$ be the set of integers from $1$ to $n$ and \mbox{$[i,j]_\Delta \mdef \{i,i+\Delta,\dots,i+\lfloor\frac{j-i}{\Delta}\rfloor\Delta\}$} be the set of integers from $i$ to $j$ in intervals of $\Delta$, where we abbreviate $[i,j] \mdef [i,j]_1$.  For a string $\x = (x_1,x_2,\dots,x_n)$ of length $|\x|=n$, we write $\x_{[i,j]_\Delta} =(x_i,x_{i+\Delta}, x_{i+2\Delta},\dots, x_{i+\lfloor\frac{j-i}{\Delta}\rfloor\Delta})$ as the subsequence that contains all symbols from $\x$ in $[i,j]_\Delta$. We let $(\x,\y)$ denote the concatenation of two strings $\x$ and $\y$. A \emph{run} in a string $\x$ is a maximal substring, which is a consecutive repetition of one letter. Let $\1^i$ and $\0^j$ denote strings of $i$ consecutive ones and $j$ consecutive zeros, respectively. We write $\log x$ to denote the logarithm of $x$ in base two. We proceed with a rigorous definition of a burst of deletions and corresponding error-correcting codes.

\begin{defn} Let $k \in \mathbb{N}$. We define $B_k$(\x) to be the set of all possible
	outputs through the $k$-burst deletion channel, i.e.,
	$$
	B_k(\x)=\left\{\left(\x_{[1,i-1]},\x_{[i+k,n]}\right):\,i\in [n-k+1]\right\}.
	$$
	We further define $B_{\le k}(\x)$ to be the union of $B_{k'}(\x)$ for all $k'$ from $1$ to $k$.
\end{defn}
Throughout the paper we will denote by $k$ the maximum length of the burst of deletions and by $k'$ the length of the burst that actually happened. 
\begin{defn}
	Let $\C\subset \{0,1\}^n$ be a code of length $n$. The code $\C$ is said to be $(\le k)$-burst-deletion-correcting if for all $\x,\y\in\C$ with $\x \neq \y$, $B_{\le k}(\x) \cap B_{\le k}(\y) = \emptyset$.
\end{defn}
Note that it has been shown \cite{gabrys2017codes} that a code can correct a burst of deletions if and only if it can correct a burst of insertions, for both, the case where the length of the burst is fixed and variable, and we thus focus on the case of deletions here. We now turn to introducing concepts that will be required for the code construction.
\begin{defn}
	Let $\x$ and $\p$ be binary strings of length $n$ and $m$, respectively. Then, we define the indicator vector of the pattern $\p$ in $\x$ to be a vector of length $n$ with entries
	$$ \mathbbm{1}_\p(\x)_i = 
	\begin{cases}
	1, \quad \text{if } \x_{[i,i+m-1]} = \p \text{ and } i\leq n-m+1, \\
	0, \quad \text{otherwise}
	\end{cases}
	$$
	Further let $n_{\p}(\x)$ be the number of ones in $\mathbbm{1}_\p(\x)$. We define $\a_\p(\x)$ to be a vector of length $n_\p(\x)+1$ whose $i$-th entry is the distance between positions of the $i$-th and $(i+1)$-st $1$ in the string $(1,\mathbbm{1}_\p(\x),1)$.
\end{defn}
Note that this definition, the elements of the vector $\a_\p(\x)$ sum up to $n+1$. This allows to define the notion of pattern-dense strings as follows.
\begin{defn}
	Let $\p \in \{0,1\}^m$ and $\delta >m$ be a positive integer. A string $\x$ is called $(\p,\delta)$-dense, if each interval of length $\delta$ in $\x$ contains at least one pattern $\p$, i.e., for each $i\in [n-\delta+1]$ there exists $j \in [i,i+\delta-m]$ such that $\p = \x_{[j,j+m-1]}$. 
\end{defn}
We conclude with standard definitions of the Varshamov-Tenengolts checksum and parity check checksum.
\begin{defn}
	Given a string $\a\in \Z^n$, we define the Varshamov-Tenengolts and parity checksum be defined by
	$$
	\mathsf{VT}(\a)=\sum_{i=1}^{n}ia_i,\quad \mathsf{P}(\a)=\sum_{i=1}^{n}a_i.
	$$
\end{defn}

\section{Optimal code correcting one burst of deletions of length at most $k$}\label{s::simple construction}
Before we start introducing a construction of an optimal code, we give a high-level overview of our construction. The code consists of two components. 
The first component is presented in Sections \ref{ss::properties of dense strings} and \ref{subsec:locate} and ensures that the burst of deletions can be \emph{located} up to a small interval of size at most $\delta$. This will be achieved by fixing a certain pattern $\p$ and densifying the codeword, such that this pattern occurs at least once in each small interval of length $\delta$. These patterns will be used to define an indicator vector, similar to the construction in \cite{sima2019optimal}, that will be protected using a modified VT code, which is defined over integers, similar to that in \cite{levenshtein1967asymptotically}. The second component then uses the fact that the burst is known up to an interval of length $\delta$ and consists of shifted VT codes, that efficiently allow to correct deletions, given their approximate location, as will be shown in Section \ref{subsec:svt}. The final construction is presented in Section \ref{ss::code construction}. Throughout the paper, we fix the pattern to be $\p = (\0^k \1^k)$ and the density to $\delta = k2^{2k+1} \lceil \log n \rceil$. 
Note that we choose the pattern $\p$ to be \emph{resistant} to burst deletions in the sense that the numbers of patterns in an original string $\x$ and the erroneous string $\y \in B_{\leq k}(\x)$ satisfy $ -1\leq n_\p(\x)-n_\p(\y) \leq 2$. This is clearly not the case for arbitrary patterns $\p$ and since this will be helpful for the code construction, we restrict ourselves to patterns of the above mentioned form.
\subsection{Properties of $(\p,\delta)$-dense strings}\label{ss::properties of dense strings}

By the definition of $\D_{\p,\delta}$, we note three trivial properties of any $\x\in\D_{\p,\delta}$ in the following.
\begin{enumerate}
	\item Every element of $\a_\p(\x)$ is at most $\delta$.
	\item Every element $a_\p(\x)_i$, where $i\geq 2$, is at least $2k$~\footnote{Note that it is possible that the first element becomes $a_\p(\x)_1 < 2k$, if the first appearance of the pattern $\p$ in $\x$ is within the first $2k$ positions.}.
	\item The number of patterns in $\p$ is at most $n_\p(\x) \leq \frac{|\x|}{2k}$.
\end{enumerate}
Additionally, we prove a statement saying that almost all strings are $(\p,\delta)$-dense, which allows an encoding into $(\p.\delta)$ strings with little redundancy, as we will show later.
\begin{lemma}\label{lem::cardinality of dense family}
	For any $n\geq 5$, the number of $(\p,\delta)$-dense strings of length $n$ is at least
	$$
|	\{0,1\}^n \cap \D_{\p,\delta}| \geq 2^n (1-n^{1-\log e})\ge  2^{n-1}.
	$$
\end{lemma}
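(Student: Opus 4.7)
The plan is to bound the number of strings of length $n$ that are \emph{not} $(\p,\delta)$-dense and then verify that this bound is below $2^n/2$. A string $\x$ fails to be $(\p,\delta)$-dense exactly when there exists some starting index $i\in[n-\delta+1]$ such that the window $\x_{[i,i+\delta-1]}$ contains no occurrence of the pattern $\p=\0^k\1^k$. Hence, by a union bound,
\[
|\{0,1\}^n\setminus\D_{\p,\delta}|\le \sum_{i=1}^{n-\delta+1} N_i,
\]
where $N_i$ is the number of binary $n$-strings whose window starting at $i$ avoids $\p$.

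To estimate $N_i$, I would chop the window of length $\delta$ into $\lfloor\delta/(2k)\rfloor$ disjoint consecutive blocks, each of length $2k$, and use the fact that if none of these blocks equals $\p$, then in particular $\p$ does not occur inside the window (a stronger requirement than just the union bound on aligned positions). Since each block is one of $2^{2k}$ strings of length $2k$ and exactly one of them equals $\p$, the number of length-$\delta$ strings whose aligned blocks all differ from $\p$ is at most $2^\delta(1-2^{-2k})^{\lfloor\delta/(2k)\rfloor}$. Substituting $\delta=k\cdot 2^{2k+1}\lceil\log n\rceil$ gives $\lfloor\delta/(2k)\rfloor=2^{2k}\lceil\log n\rceil$, so
\[
(1-2^{-2k})^{2^{2k}\lceil\log n\rceil}\le e^{-\lceil\log n\rceil}\le e^{-\log n}=n^{-\log e}.
\]
Multiplying by the $2^{n-\delta}$ choices for the remaining coordinates, $N_i\le 2^n\, n^{-\log e}$.

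Summing over the at most $n$ choices of $i$ yields $|\{0,1\}^n\setminus\D_{\p,\delta}|\le 2^n\, n^{1-\log e}$, so
\[
|\{0,1\}^n\cap\D_{\p,\delta}|\ge 2^n\bigl(1-n^{1-\log e}\bigr),
\]
which is the first inequality. For the second inequality, I would check that $n^{1-\log e}\le 1/2$ for $n\ge 5$. Since $\log e-1\approx 0.4427$, we have $5^{\log e-1}>2$, and the function $n\mapsto n^{\log e-1}$ is increasing, so the bound $\tfrac12\cdot 2^n$ follows at once.

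The only real subtlety is step three: it is tempting to compute the exact number of length-$\delta$ strings avoiding $\p$ via a transfer-matrix argument, but this is unnecessary since the crude block-decomposition bound already buys the factor $n^{-\log e}$ and loses only constants in $k$. So the argument is essentially a union bound plus a geometric estimate, with no genuine obstacle.
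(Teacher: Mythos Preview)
Your argument is essentially the paper's own proof, phrased in counting language rather than probability: union bound over window positions, decomposition of each window into $\delta/(2k)=2^{2k}\lceil\log n\rceil$ disjoint $2k$-blocks, and the estimate $(1-2^{-2k})^{2^{2k}\lceil\log n\rceil}\le e^{-\log n}=n^{-\log e}$. One wording slip to fix: the implication you state is backwards---the correct direction is that if $\p$ does not occur in the window then in particular no aligned block equals $\p$, which is exactly what gives the containment needed for your upper bound on $N_i$; the computation that follows is fine.
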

\begin{proof}[Proof of Lemma~\ref{lem::cardinality of dense family}]
	Let $\z$ be a random string chosen uniformly from the set $\{0,1\}^n$ and $E_i$ be the event that $\z_{[i+1,i+\delta]}$ doesn't contain the pattern $\p$. The probability of $E_i$ is at most 
	\begin{align*}
		\Pr(E_i) &\leq \prod_{j=0}^{\frac{\delta}{2k}-1} \Pr(\z_{[i+2kj+1,i+2k(j+1)]}\neq \p) =\left(\frac{2^{2k}-1}{2^{2k}}\right)^{\frac{\delta}{2k}} \\&= \left(1-1/2^{2k}\right)^{2^{2k}\lceil\log n\rceil} \le e^{-\log n}= \frac{1}{n^{\log e}}.
	\end{align*}
	Here, we use the property that $(1-1/x)^x\le e^{-1}$ for $x\ge 1$.
	Therefore, we have by the union bound that probability of the event that $\z$ is not in $\D_{\p,\delta}$ is upper bounded by
	$$
	\Pr(\z\not\in\D_{\p,\delta})\le (n-\delta+1)\Pr(E_i)\le \frac{1}{n^{\log e-1}}\le \frac{1}{2}, 
	$$ 
	where in the final inequality we used that $n \geq 5$.
\end{proof}
\subsection{Locating the burst of deletions} \label{subsec:locate}
In this subsection, we show how to construct a code that allows to locate the burst of deletions up to an uncertainty of $\delta$. The following construction combines the previously introduced $(\p,\delta)$-dense strings together with a VT-code.
\begin{con}
	For any integers $c_0$ and $c_1$, let 
	$$\C_\mathsf{loc}(c_0,c_1) \hspace{-.08cm}=\hspace{-.08cm} \left\{\x\in \{0,1\}^n: \hspace{-.08cm}\begin{array}{ll} \x \in \D_{\p,\delta},& \\n_\p(\x)= c_0 &(\bmod~4), \\\mathsf{VT}(\a_\p(\x)) \hspace{-.08cm}=\hspace{-.08cm} c_1 &(\bmod~{2n})\hspace{-.08cm}
	\end{array}\right\}.$$
\end{con}
The locating property of this code is given as follows.

\newcommand{\vpo}{\vphantom{1}}
\newcommand{\vpp}{\vphantom{\p}}
\newcommand*{\hl}{%
	\tcboxmath[on line, colback=black!5!, colframe=black!50!, arc=3pt, boxrule=0.8pt,left=1pt,right=0.5pt,top=0pt,bottom=-2pt]%
}
\newcommand*{\hla}{%
	\tcboxmath[colback=black!5!, colframe=black!50!, size=fbox, arc=3pt, boxrule=0.8pt,left=1pt,right=0.5pt,top=0pt,bottom=-13.5pt]%
}
\newcommand*{\hlb}{%
	\tcboxmath[colback=black!5!, colframe=black!50!, size=fbox, arc=3pt, boxrule=0.8pt,left=1pt,right=1pt,top=-2.5pt,bottom=-2pt]%
}

\begin{table}
	\setlength{\tabcolsep}{0.3pt}
	\centering
	\caption{Illustration of the different cases in the proof of Lemma \ref{lem::locating burst}. For each case, the original string $\x$ is depicted in the first row and the resulting string $\y$ in the second row. The below square brackets marks the position of the burst deletion and the gray background highlights the pattern $\p$. The position of the $(j-1)$-st pattern is marked by $S(j)$. Note that the table illustrates examples of each of the cases. There are some configurations, which are not captured by the table, e.g., it is possible in case 3 to delete the first part of the pattern. However, this does not affect the analysis of the proof.}
	{\renewcommand{\arraystretch}{2.1}
		\begin{tabular}{llc} \specialrule{.8pt}{0pt}{0pt}
			Case & \multicolumn{1}{c}{Burst type} & Remark \\ \specialrule{.8pt}{0pt}{0pt}
			\multirow{2.25}{*}{1.i} & $\phantom{\rightarrow~} (\dots\!\!\!\!\!\overset{\overset{S(j)}{\downarrow}}{\vpo} \!\!\!\! \hl{\p\vpo} ~  \x_1 ~ \underbracket{\x_2}_{k'} ~ \x_3~  \dots)$ & \multirow{2.25}{*}{$|\x_2| = k'$} \\[-.5ex]
			& $\rightarrow~(\dots\hl{\p\vpo} ~ \x_1 ~ \x_3 ~  \dots)$ & \\ \hline
			
			\multirow{2.25}{*}{1.ii} & $\phantom{\rightarrow~}(\dots \hl{\p\vpo} ~\x_1'~ 0\dots 0 ~ \hla{0\,. \underbracket{\vpp.\,.\,01_{}\,.\,.}_{k'}.\,1} ~ 1\dots 1 ~\x_2'~ \dots )$ & \\[1ex]
			& $\rightarrow~(\dots \hl{\vpo \p} ~\x_1'~ \hl{\vpp0\dots01\dots1}~\x_2'~ \dots )$ & \\ \hline
			
			\multirow{2.25}{*}{2} & $\phantom{\rightarrow~}(\dots \hl{\p\vpo} ~\x_1~\0^{k_0}~ \underbracket{\x_2}_{k'}\0^{k-k_0} ~\1^k ~ \dots)$ & \multirow{2.25}{*}{$|\x_2| = k'$} \\[-.5ex]
			 & $\rightarrow~(\dots\hl{\p\vpo}~\x_1~ \hlb{\vpp\0^k~\1^k} ~ \dots)$ & \\ \hline
			 
			\multirow{2.5}{*}{3} & $\phantom{\rightarrow~}(\dots \hl{\p\vpo} ~\x_1~\lefteqn{\hlb{\vpp\phantom{\0^k1\dots \,}}} ~\0^k1 \,.\,.\underbracket{\vpp.\, 1~~\x_2}_{k'} \x_3~\dots)$ & \\[-.5ex]
			& $\rightarrow~(\dots\hl{\p\vpo} ~\x_1~ \0^k\overbrace{1\dots1  }^{<k} ~\x_3~ \dots)$ & \\ \hline
			
			\multirow{2.75}{*}{4} & $\phantom{\rightarrow~}(\dots \hl{\p\vpo} ~\x_1  \lefteqn{\hlb{\phantom{\vpp\0^k1 \dots \,\, }}}~ \0^k1~.\,. \underbracket{\vpp.\,1 ~ \x_1  \lefteqn{\hlb{\phantom{\vpp0\dots 0^k }}} ~0\,.}_{k'} .\,.\,0 \1^k ~\dots )$ & \multirow{2.75}{*}{$|\x_1| \leq k'-2$} \\
			& $\rightarrow~(\dots \hl{\p\vpo} ~\x_1~ \0^k \overbrace{1\dots 1}^{\footnotesize < k} \overbrace{0\dots 0}^{\footnotesize < k} \1^k \dots)$ & \\ \specialrule{.8pt}{0pt}{0pt}
	\end{tabular}}
	\label{tab:cases}
\end{table}

\begin{lemma}\label{lem::locating burst}
	Let $\x \in \C_\mathsf{loc}(c_0,c_1)$ and $\y\in B_{\leq k}(\x)$. Given $\y,c_0,c_1$, it is possible to find in time $O(n)$ an interval $L \subseteq \mathbb{N}$ of length at most $\delta$, such that $\y = (\x_{[1,\ell]},\x_{[\ell+k'+1,n]})$ for some $\ell \in L$, where $k' = |\x|-|\y|$. 
\end{lemma}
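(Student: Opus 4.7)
The plan is to use the pattern count $n_\p$ as a coarse diagnostic of the burst's effect and then apply Varshamov--Tenengolts decoding on the compact vector $\a_\p$ to pinpoint the affected gap, which is short by $(\p,\delta)$-density.

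First I would compute $n_\p(\y)$ from $\y$ in linear time. Because $\p=\0^k\1^k$ is burst-resistant, the integer $d\mdef n_\p(\x)-n_\p(\y)$ lies in $\{-1,0,1,2\}$, and from $c_0\equiv n_\p(\x)\pmod 4$ the value of $d$ is uniquely determined by the single congruence $n_\p(\y)+d\equiv c_0\pmod 4$. Hence $n_\p(\x)$ is recovered.

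Next, a structural case analysis, as illustrated in Table~\ref{tab:cases}, shows that for each value of $d$ the vector $\a_\p(\y)$ is obtained from $\a_\p(\x)$ by a \emph{local} edit confined to at most three consecutive entries indexed near some unknown pattern index $j$, corresponding to the pattern of $\x$ closest to the burst. In every case the modification is described explicitly given $d$, $k'=|\x|-|\y|$, and a small constant number of entries of $\a_\p(\y)$ read off directly from $\y$: when $d=0$ a single entry decreases by $k'$; when $d=1$ two consecutive entries merge into $a_\p(\x)_j+a_\p(\x)_{j+1}-k'$; the cases $d=-1$ (an old entry is split by a newly created pattern) and $d=2$ (three consecutive entries collapse into one) are analogous.

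Given these local descriptions, the target quantity $\mathsf{VT}(\a_\p(\x))$ can be written, for each candidate pattern index $j$, as an explicit function $F_d(j)$ of $\mathsf{VT}(\a_\p(\y))$, $k'$, and partial sums of entries of $\a_\p(\y)$ up to position $j$. The decoder then accepts the unique $j$ satisfying $F_d(j)\equiv c_1\pmod{2n}$. Injectivity of $F_d$ modulo $2n$ is verified case by case using the bounds $a_\p(\x)_i\le\delta$, $a_\p(\x)_i\ge 2k$ for $i\ge 2$, and $n_\p(\x)+1\le n/(2k)+1$, which force $F_d$ to be strictly monotone in $j$ with overall range strictly less than $2n$. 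Once $j$ is identified, the burst lies in the union of at most two consecutive pattern-gaps of $\x$ around position $j$; each gap has length at most $\delta$ by $(\p,\delta)$-density, yielding an interval $L$ of length at most $\delta$. All steps run in $O(n)$ time.

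The principal obstacle will be the combinatorial case analysis: one must verify that the only ways $n_\p$ can change by $-1,0,1,2$ under a single burst of length at most $k$ are precisely the four families of Table~\ref{tab:cases} --- using in particular that consecutive occurrences of $\0^k\1^k$ in $\x$ are separated by at least $2k$, so that the burst interacts with at most two patterns --- and that in each case the VT-signature $F_d$ is injective modulo $2n$ on the admissible range of $j$.
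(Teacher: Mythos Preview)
Your overall strategy matches the paper's proof: determine $d=n_\p(\x)-n_\p(\y)\in\{-1,0,1,2\}$ from $c_0$, then do a case analysis on $d$ and use the VT-constraint on $\a_\p$ to recover the index $j$ of the affected gap. However, two steps in your write-up are too optimistic and would not go through as stated.

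\textbf{The congruence $F_d(j)\equiv c_1$ does not hold exactly when patterns are destroyed.} In the cases $d=1$ and $d=2$, the quantity $\mathsf{VT}(\a_\p(\x))-\mathsf{VT}(\a_\p(\y))$ depends not only on $j$, $k'$, and the entries of $\a_\p(\y)$, but also on the \emph{destroyed} entries $a_\p(\x)_{j+1}$ (and $a_\p(\x)_{j+2}$ when $d=2$), which are not recoverable from $\y$. So there is no function $F_d$ of the observable data alone satisfying $F_d(j)\equiv c_1\pmod{2n}$. The paper handles this by writing $\Delta=f(j)+(\text{unknown nonnegative term})$ with $f$ cyclically monotone, bounding the unknown term by $a_\p(\y)_j-k'$ using $a_\p(\x)_i\ge 2k$, and then identifying $j$ via the \emph{sandwich} $f(j)<\Delta\le f(j-1)$ rather than an equality. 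The paper also singles out the boundary situation $j=1$ with $a_\p(\x)_1\le 2k'$, where the lower bound $a_\p(\x)_i\ge 2k$ fails; you should note that exception as well.

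\textbf{The $d=0$ case is not a single-entry change.} You write that ``when $d=0$ a single entry decreases by $k'$'', but there is a second subcase in which the burst deletes a piece $\0^{k_1}\1^{k_2}$ of a pattern, simultaneously destroying one occurrence of $\p$ and creating another; then \emph{two} consecutive entries of $\a_\p$ change by $k_1$ and $k_2$ with $k_1+k_2=k'$. The paper accounts for this via $\Delta=jk'+k_2$ and recovers both $j$ and $k_2$ from $\Delta$.

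Finally, your last sentence asserts that a ``union of at most two consecutive pattern-gaps, each of length at most $\delta$'' yields $|L|\le\delta$; taken literally that gives $2\delta$. In fact, once $j$ (and, in the $d=0$ subcase above, also $k_2$) is pinned down, the paper shows in each case that the burst sits inside a \emph{single} explicitly described window of length $\le\delta$; you should make that refinement rather than appeal to a union of gaps.
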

\begin{proof}[Proof of Lemma~\ref{lem::locating burst}]
	We start with the observation that a burst of deletions of length $k'\le k$ can not destroy more than two patterns $\p$ or create more than one new pattern $\p$  in $\x$. Therefore, there are exactly four possible cases on the difference $n_{\p}(\x)-n_{\p}(\y)$, ranging from $-1$ to $2$. Moreover, this difference can be found by computing $c_0-n_{\p}(\y)~(\bmod ~4)$. Further, its possible to compute $\Delta' \mdef \Delta \pmod{2n}$ with $\Delta  \mdef \mathsf{VT}(\a_\p(\x))-\mathsf{VT}(\a_\p(\y))$ in $O(n)$ time. We distinguish between the above four cases, which are illustrated in Table \ref{tab:cases}. 
	\begin{enumerate}[wide, labelwidth=!]
		\item $n_{\p}(\x)-n_{\p}(\y) = 0$. In this case, we have two possibilities: i) no pattern is destroyed and no pattern is created, ii) one pattern is destroyed and one pattern is created. For both possibilities, the lengths of $\a_\p(\x)$ and $\a_\p(\y)$ agree and $\a_\p(\y)$ differs in at most two entries, $a_\p(\y)_j$  and possibly $a_\p(\y)_{j+1}$, from $\a_\p(\x)$ for some $1\leq j \leq n_\p(\x)+1$. Denoting \mbox{$k_1 \mdef \ a_\p(\x)_j - a_\p(\y)_j$} and $k_2 \mdef a_\p(\x)_{j+1} - a_\p(\y)_{j+1}$, we see that $k_1 + k_2 = k'$ and $k_1> 0$, $k_2\geq 0$. Note that possibility ii) can only occur when the burst deletes a part $(\0^{k_1}\1^{k_2})$ of the pattern $\p$. We obtain for the difference between the VT check sums of $\a_\p(\x)$  and $\a_\p(\y)$
		$$
		\Delta= jk_1 + (j+1)k_2=jk' + k_2.
		$$
		Since $0 < \Delta \leq n$, it follows that $\Delta'=\Delta$ and we can infer $j = \lfloor\frac{\Delta'}{k'}\rfloor$ and $k_2$ in $O(n)$ time. Define
		$$ S(j)\mdef \sum_{i<j} a_\p(\y)_i, $$
		in the following, which clearly can be computed, given $\y,c_0,c_1$, as $j$ can be inferred as discussed above. Note that $S(j)$ is exactly the position of the $(j-1)$-st pattern $\p$ in both $\x$ and $\y$. Its position is highlighted in Table \ref{tab:cases}. Moreover, using this definition, we can locate that the block of bits was deleted at a position within the range 
		$$
		L = \left[S(j)+1,S(j)+\delta\right]
		$$
		in the case i) and within the range
		$$ L = \left[ S(j+1)+1,S(j+1)+k \right] $$
		for the case ii), both of lengths at most $\delta$.
		\item $n_{\p}(\x)-n_{\p}(\y) =-1$. In this case, an additional pattern $\p$ is created in $\y$ and no other pattern is destroyed.  In other words $\a_\p(\y)$ is obtained from $\a_\p(\x)$ by replacing an element $a_\p(\x)_j$ for some $1\leq j \leq n_\p(\x)+1$ by $(a_\p(\y)_{j}, a_\p(\y)_{j+1})$, where $a_\p(\x)_j - a_\p(\y)_{j} - a_\p(\y)_{j+1}= k'$.
		Computing the difference between the VT check sums of $\a_\p(\x)$  and $\a_\p(\y)$ yields
		\begin{align*}
		\Delta&= ja_{\p}(\x)_j \!-\! \sum_{i> j}a_{\p}(\x)_i \!-\! ja_{\p}(\y)_j - (j+1)a_{\p}(\y)_{j+1}   \\
		&= j k'-\sum_{i> j}a_{\p}(\x)_i - a_\p(\y)_{j+1}= j k'-\sum_{i> j}a_{\p}(\y)_i.
		\end{align*}
		We note that the function $f_1(v)\mdef v k'-\sum_{i> v}a_{\p}(\y)_i \pmod{2n}$ is cyclically monotonically increasing in $v$ as both summands are increasing in $v$. $f_1(v)$ is also injective in $v$, since $k' \leq vk' \leq n$ and $0\leq \sum_{i> v}a_{\p}(\y)_i \leq n-k'$.  Therefore, and further since $f_1(v)$ and $\Delta$ are computable with the knowledge of only $\y,c_0,$ and $c_1$, we are able to find $j$ uniquely  in $O(n)$ time. Moreover, we can locate that the block of bits was deleted at a position within the range 
		$$
		L = \left[S(j)+1,S(j)+\delta\right]
		$$
		of length at most $\delta$.
		\item $n_\p(\x)-n_\p(\y)=1$. In this case, we have that one pattern $\p$ is destroyed in $\x$. Indeed, by the choice of the pattern $\p$ an instance when two patterns are destroyed and one is created is not possible. Therefore, $\a_\p(\y)$ is obtained from $\a_\p(\x)$ by replacing two elements $(a_\p(\x)_j, a_\p(\x)_{j+1})$ by  $a_\p(\y)_j$ with the property that $a_\p(\y)_j=a_\p(\x)_{j}+a_\p(\x)_{j+1}-k'$, where $1 \leq j \leq n_\p(\x)$. The difference between the VT check sums of $\a_\p(\x)$  and $\a_\p(\y)$ is
		\begin{align*}
		\Delta&=  \!\sum_{i> j+1} \hspace{-.08cm} a_{\p}(\x)_i \!+\! ja_{\p}(\x)_j \hspace{-.08cm}+\hspace{-.08cm} (j+1)a_{\p}(\x)_{j+1}  \!-\! ja_{\p}(\y)_j \\
		&= \sum_{i> j+1}a_{\p}(\x)_i +j k' + a_\p(\x)_{j+1}\\
		&= \sum_{i> j}a_{\p}(\y)_i +j k'+a_\p(\x)_{j+1}.
		\end{align*}
		It is important to recall that every component of the vector $\a_{\p}(\y)$ (except for possibly $a_{\p}(\y)_1$) is at least $2k$ and $k'\leq k$ and thus the function $f_2(v) \mdef \sum_{i> v}a_{\p}(\y)_i +v k' \pmod{2n}$ is again cyclically monotonically decreasing and injective in $v$. Assume first that $j=1$ and $a_{\p}(\x)_1 \leq 2k'$. This case is uniquely identifiable as only here $\Delta \geq n-k'$ and the converse is true for all other cases. Therefore, we can now assume that every component of $a_\p(\x)_i$ with $i\geq j$ is at least $2k\ge2k'$, which implies that $a_\p(\x)_{j+1}= a_\p(\y)_j-a_\p(\x)_{j}+k' \leq a_\p(\y)_j-k'$ and thus
		\begin{align*}
			\sum_{i> j}&a_{\p}(\y)_i +j k'< \Delta \leq \sum_{i> j-1}a_{\p}(\y)_i +(j-1) k' . 
		\end{align*}
		Thus, we can find $j$ uniquely  in $O(n)$ time by computing $f_2(v)$ for all $1\leq v \leq n_\p(\y)+1$ and choosing $j$, such that $f_2(j) < \Delta' \leq f_2(j-1)$, where the inequalities should be understood in a cyclic manner. Moreover, we can locate that the block of bits was deleted at a position within the range 
		$$
		L = \left\{ \begin{array}{ll}
		\left[S(j)+2k+1, S(j)+ 2k + \delta\right], & \text{ if } j \geq 2, \\
			\left[1, \delta\right]	, & \text{ if } j = 1
		\end{array} \right. ,
		$$
		which has length at most $\delta$.
		\item $n_\p(\x)-n_\p(\y)=2$. In this case, two patterns $\p$ are destroyed in $\x$. Therefore, $\a_\p(\y)$ is obtained from $\a_\p(\x)$ by replacing the triple $(a_\p(\x)_{j},a_\p(\x)_{j+1},a_\p(\x)_{j+2})$ for some $1\leq j \leq n_\p(\x)-1$ by one element $a_\p(\y)_j$, where $a_\p(\y)_j=a_\p(\x)_{j}+a_\p(\x)_{j+1}+a_\p(\x)_{j+2}-k'$ and $a_\p(\x)_{j+1}\leq 2k + k'-2$.
		The difference between the VT check sums of $\a_\p(\x)$  and $\a_\p(\y)$ is given by
		\begin{align*}
		\Delta&= 2\sum_{i> j+2}a_{\p}(\x)_i + ja_{\p}(\x)_j + (j+1)a_{\p}(\x)_{j+1} \\ & \hphantom{=} + (j+2)a_{\p}(\x)_{j+2}  - ja_{\p}(\y)_j \\
		&= 2\sum_{i> j+2}a_{\p}(\x)_i +j k' + a_\p(\x)_{j+1}+2a_\p(\x)_{j+2}\\
		&= 2\sum_{i> j}a_{\p}(\y)_i +j k'+a_\p(\x)_{j+1}+2a_\p(\x)_{j+2}.
		\end{align*}
		The function $f_3(v)\mdef 2\sum_{i> v}a_{\p}(\y)_i +v k' \pmod{2n}$ is cyclically monotonically decreasing and injective in $v$. Additionally, as for each $i\geq 2$, it holds true that $a_\p(\x)_i\geq 2k\ge2k'$, we have that $a_\p(\x)_{j+2} = a_\p(\y)_j-a_\p(\x)_{j}-a_\p(\x)_{j+1}+k' \leq a_\p(\y)_j-k'$ and hence
		\begin{align*}
			 2\sum_{i> j}&a_{\p}(\y)_i +j k' < \Delta < 2\sum_{i> j-1}a_{\p}(\y)_i +(j-1) k'.
		\end{align*}
		Thus, we can find $j$ uniquely in $O(n)$ time by choosing $j$, such that $f_3(j) < \Delta' \leq f_3(j-1)$, where the inequalities should be understood in a cyclic manner. Moreover, we can locate that the block of bits was deleted at a position within the range 
		$$
L = \left\{ \begin{array}{ll}
\left[S(j)+2k+1, S(j)+ 2k + \delta\right], & \text{ if } j \geq 2, \\
\left[1, \delta\right]	, & \text{ if } j = 1
\end{array} \right. ,
$$
		of length at most $\delta$.
	\end{enumerate}
	This completes the proof.
\end{proof}
\subsection{Shifted VT codes} \label{subsec:svt}
Having the knowledge of the approximate location of the burst of deletions, let us recall the concept of so called \emph{shifted VT codes}, introduced in~\cite{schoeny2017codes}. They are defined as follows.
\begin{con}
	Let $v,p,n \in \mathbb{N}$ with $0\leq v< p \leq n+1$ and $b \in \{0,1\}$. The \emph{shifted VT code} is defined by
	$$ \mathcal{C}^n_\mathsf{SVT}(v,b, p) = \left\{\x \in \{0,1\}^n : \begin{array}{ll}\mathsf{VT}(\x) = v &(\bmod~ p),\\ \mathsf{P}(\x) = b &(\bmod~ 2)\end{array} \right\}. $$
\end{con}
These codes are able to correct a single deletion once the position where the deletion
occurred is known to within an interval of size less than $p$. In particular, it is possible to prove the following property.
\begin{lemma}[Lemma 4 from~\cite{schoeny2017codes}]\label{lem::shifted vt code}
	Let $\x \in \mathcal{C}^n_\mathsf{SVT}(v,b, p)$ and $L=[j,j+p-2]$ for some $j$. Given any $\y=(\x_{[1:\ell-1]},\x_{[\ell+1:n]})$, where $\ell \in L$ and the knowledge of the set $L$, we are able to reconstruct $\x$ in $O(n)$ time.
\end{lemma}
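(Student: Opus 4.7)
The plan is to exploit the two congruences defining $\mathcal{C}^n_\mathsf{SVT}$ independently: the parity constraint $\mathsf{P}(\x)\equiv b\pmod{2}$ will pinpoint the value of the deleted bit $x_\ell$, while the VT constraint modulo $p$, together with the window $L$ of size at most $p-1$, will pinpoint its position. The first step is immediate, since $\mathsf{P}(\x)-\mathsf{P}(\y)=x_\ell$ gives $x_\ell\equiv b-\mathsf{P}(\y)\pmod{2}$ from $\y$ and $b$ alone.

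For the second step I would enumerate candidate reconstructions. For each position $\ell\in L$, let $\x^{(\ell)}$ denote the string obtained from $\y$ by inserting the now known bit $x_\ell$ at position $\ell$. A direct calculation for $\ell_1<\ell_2$ both in $L$ yields
$$\mathsf{VT}(\x^{(\ell_1)})-\mathsf{VT}(\x^{(\ell_2)})=\sum_{i=\ell_1}^{\ell_2-1}y_i-(\ell_2-\ell_1)x_\ell,$$
while a symbol-wise comparison shows that $\x^{(\ell_1)}=\x^{(\ell_2)}$ holds precisely when $y_{\ell_1}=\cdots=y_{\ell_2-1}=x_\ell$. Hence, whenever two candidates are genuinely distinct, the above difference lies in $[1,\ell_2-\ell_1]$ (if $x_\ell=0$) or in $[-(\ell_2-\ell_1),-1]$ (if $x_\ell=1$). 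Since $\ell_2-\ell_1\leq|L|-1\leq p-2$, neither range meets $0\pmod{p}$, so among the $|L|$ candidates exactly one can match the prescribed residue $v\pmod{p}$, and that one must equal $\x$.

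Algorithmically, after an $O(n)$ precomputation of $\mathsf{VT}(\y)$ and a suffix sum of $\y$, the quantity $\mathsf{VT}(\x^{(\ell)})$ can be updated in $O(1)$ time each time $\ell$ advances by one through $L$, giving overall $O(n)$ running time. The main obstacle is the injectivity argument in the second step: the bound $|L|\leq p-1$ is used in precisely the right way to leave just enough headroom modulo $p$ to separate all viable insertion positions, and without it the scheme would fail.
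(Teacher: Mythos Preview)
Your proof is correct and follows essentially the same approach as the paper: determine the deleted bit from the parity constraint, then show via the identity $\mathsf{VT}(\x^{(\ell_1)})-\mathsf{VT}(\x^{(\ell_2)})=\sum_{i=\ell_1}^{\ell_2-1}y_i-(\ell_2-\ell_1)x_\ell$ and a case split on $x_\ell\in\{0,1\}$ that two distinct candidates cannot both satisfy the VT congruence modulo $p$ when $\ell_2-\ell_1\leq p-2$. The paper phrases this as a contradiction between two hypothetical codewords $\c,\d$ yielding the same $\y$, whereas you phrase it constructively by enumerating insertions; the underlying computation and bound are identical, and your explicit $O(n)$ algorithm is a welcome addition.
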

\begin{proof}[Proof of Lemma~\ref{lem::shifted vt code}]
	Let $\c,\d \in \mathcal{C}^n_\mathsf{SVT}(v,b, p)$ be arbitrary with $\c\neq \d$ and $\y=(\c_{[1:\ell_1-1]},\c_{[\ell_1+1:n]})=(\d_{[1:\ell_2-1]},\d_{[\ell_2+1:n]})$, $\ell_2>\ell_1$ i.e., $\y$ is obtained from $\c$ by deleting $c_{\ell_1}$ and from $\d$ by deleting $d_{\ell_2}$. We will show that this is only possible when $\ell_2-\ell_1 \geq p$. 
	First, we observe that $c_{\ell_1}=d_{\ell_2}$ by 
	$$
	\mathsf{P}(\c)=\mathsf{P}(\d)=\mathsf{P}(\y)+c_{\ell_1} = \mathsf{P}(\y) + d_{\ell_2} \pmod{2}.
	$$
	The difference of the VT check sums of $\c$ and $\d$ is given by
	$$
	\mathsf{VT}(\c)-\mathsf{VT}(\d)= \ell_1 c_{\ell_1} - \ell_2 d_{\ell_2} + \sum_{i=\ell_1}^{\ell_2-1}y_i,
	$$
	and note that $\mathsf{VT}(\c)-\mathsf{VT}(\d) = 0 \pmod{p}$ by construction. If $c_{\ell_1}=d_{\ell_2}=0$, then $\mathsf{VT}(\c)-\mathsf{VT}(\d) = 0 \pmod{p}$ in only the case $\sum_{i=\ell_1}^{\ell_2-1}y_i=0 \pmod p$, which implies that $\ell_2-\ell_1\geq p$ or $y_i = 0$ for all $\ell_1\leq i<\ell_2$, which implies that $\c=\d$, which is a contradiction. Similarly, if $c_{\ell_1}=d_{\ell_2}=1$, then $\mathsf{VT}(\c)-\mathsf{VT}(\d) = 0 \pmod{p}$ implies that $\sum_{i=\ell_1}^{\ell_2-1}y_i=\ell_2-\ell_1 \pmod{p}$, which implies that either $\ell_2-\ell_1\geq p$ or $y_i=1$ for all $\ell_1\leq i <\ell_2$ and thus $\c=\d$, which is a contradiction. Therefore, since $\ell \in L$, $\x$ is the only possible codeword that could have resulted in $\y$, as for all other codewords $\x' \in \mathcal{C}^n_\mathsf{SVT}(v,b, p)$ with $\y = (\x'_{[1:\ell'-1]},\x'_{[\ell'+1:n]})$, $\ell' \notin L$.
\end{proof}
\subsection{Code construction}\label{ss::code construction}
We start by stating the final code construction, which is assembled using the locating code from Section \ref{subsec:locate} and the shifted VT codes discussed in the previous Section \ref{subsec:svt}.
\begin{con} \label{con:final}
	For arbitrary integers $c_0$, $c_1$, $\{v_{i,k'}\}_{1\le i \le k' \le k}$ and $\{b_{i,k'}\}_{1\le i \le k' \le k}$ we define a code $\C_{k}^n$ as follows
	$$
	\C_{k}^n=\left\{\x\in\{0,1\}^n:\,\,\begin{aligned}
	&\x\in \mathcal{C}_\mathsf{loc}(c_0,c_1),\\
	&\forall~i,k': 1\leq i\leq k' \leq k:\\
	&~~~ \x_{[i,n]_{k'}} \in \mathcal{C}_\mathsf{SVT}(v_{i,k'},b_{i,k'},\delta)
	\end{aligned}
	\right\}.
	$$
\end{con}
We prove the correctness of this construction and compute its redundancy in the following theorem.
\begin{theorem}\label{th::one-burst-deletion code}
	For any $n$, $c_0,c_1,\{v_{i,k'}\}_{1\le i \le k' \le k}$ and $\{b_{i,k'}\}_{1\le i \le k' \le k}$, the code $\C_{k}^n$ is a $(\leq k)$-burst-deletion-correcting code. Further, there exists a choice of these parameters such that the redundancy is at most
	$$n-\log|\C_{k}^n| \leq  \log n + {k+1 \choose  2} \log\log n + c_k,$$ for some constant $c_k$ that only depends on $k$. Moreover, any $\x\in \C_{k}^n$ can be recovered from any $\y\in B_{\le k}(\x)$ in $O(n)$ time.
\end{theorem}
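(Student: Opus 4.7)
The plan is threefold: (i) show that $\C_k^n$ decodes any burst of length $k'\le k$, (ii) bound the redundancy via a pigeonhole argument, and (iii) observe that the decoder runs in linear time. The ingredients are already in hand---Lemma~\ref{lem::locating burst} for burst localization and Lemma~\ref{lem::shifted vt code} for single-deletion recovery on each of the $k'$ interleaved subsequences.

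For (i), given $\y\in B_{\le k}(\x)$ the receiver reads off $k'=|\x|-|\y|$, so the relevant grid spacing is known. Because $\x\in\mathcal{C}_\mathsf{loc}(c_0,c_1)$, Lemma~\ref{lem::locating burst} produces in $O(n)$ time an interval $L$ of length at most $\delta$ in which the deleted block lies. A burst of $k'$ consecutive deletions removes exactly one symbol from each subsequence $\x_{[i,n]_{k'}}$ for $i\in[k']$, so $\y_{[i,|\y|]_{k'}}$ is a single-deletion image of $\x_{[i,n]_{k'}}$. The interval $L$ in $\x$ induces, for each $i$, an interval of candidate indices in the $i$-th subsequence whose length is roughly $\lceil|L|/k'\rceil$, and the SVT modulus $\delta$ in Construction~\ref{con:final} is chosen so that Lemma~\ref{lem::shifted vt code} applies. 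Running the SVT decoder on each of the $k'$ subsequences reconstructs $\x_{[i,n]_{k'}}$, and interleaving yields $\x$. Distinct codewords therefore produce disjoint images under $B_{\le k}$, proving the correcting property.

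For (ii), I would use an averaging argument. By Lemma~\ref{lem::cardinality of dense family} at least $2^{n-1}$ binary strings of length $n$ are $(\p,\delta)$-dense. The parameter vector $(c_0,c_1,\{v_{i,k'}\},\{b_{i,k'}\})$ ranges over at most
$$
4\cdot 2n\cdot(2\delta)^{\binom{k+1}{2}}\;=\;8n\cdot(2\delta)^{\binom{k+1}{2}}
$$
values, since the number of index pairs $(i,k')$ with $1\le i\le k'\le k$ is $\sum_{k'=1}^{k}k'=\binom{k+1}{2}$. By pigeonhole, for some choice of the parameters
$$
|\C_k^n|\;\ge\;\frac{2^{n-1}}{8n\cdot(2\delta)^{\binom{k+1}{2}}}.
$$
Substituting $\delta=k\,2^{2k+1}\lceil\log n\rceil$ gives $\log\delta=\log\log n+O_k(1)$, so
$$
n-\log|\C_k^n|\;\le\;\log n+\binom{k+1}{2}\log\log n+c_k
$$
with an explicit constant $c_k$ that depends only on $k$; tracing the inequality yields the value of $c_k$ promised in the theorem statement.

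For (iii), the decoding algorithm is exactly the procedure from (i): recover $k'$, locate via Lemma~\ref{lem::locating burst} in $O(n)$ time, then invoke Lemma~\ref{lem::shifted vt code} on each of the $k'$ subsequences, which have total length $n-k'$, for a combined cost of $O(n)$. The main obstacle I expect is the bookkeeping in step (i)---verifying uniformly across all four subcases of Lemma~\ref{lem::locating burst} that the interval $L$ returned by the locator, when pulled back to the grid $[i,n]_{k'}$, fits inside a window of the form $[j,j+\delta-2]$ required by Lemma~\ref{lem::shifted vt code}. This is pure index arithmetic, but it is the step where the particular value $\delta=k\,2^{2k+1}\lceil\log n\rceil$ (chosen much larger than $k$) is actually used, and one must compare the length and the alignment of $L$ against the spacing $k'$ before appealing to the shifted VT decoder.
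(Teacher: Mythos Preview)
Your proposal is correct and follows essentially the same approach as the paper: pigeonhole over the parameter tuple $(c_0,c_1,\{v_{i,k'}\},\{b_{i,k'}\})$ against the $2^{n-1}$ dense strings from Lemma~\ref{lem::cardinality of dense family} for the redundancy bound, then Lemma~\ref{lem::locating burst} followed by Lemma~\ref{lem::shifted vt code} on each of the $k'$ interleaved subsequences for decoding, all in $O(n)$ time. The only minor remark is that the specific value $\delta=k2^{2k+1}\lceil\log n\rceil$ is dictated by the density argument in Lemma~\ref{lem::cardinality of dense family} rather than by the SVT window-fitting step you flag at the end; the latter works for any $\delta$ since the projected interval on $\x_{[i,n]_{k'}}$ has length at most $\lceil\delta/k'\rceil$, and the paper glosses over this bookkeeping in the same way you do.
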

\begin{proof}[Proof of Theorem~\ref{th::one-burst-deletion code}]
	We start by proving the upper bound on the redundancy. By Lemma~\ref{lem::cardinality of dense family}, we know that $|\{0,1\}^n\cap \D_{\p,\delta}|\ge 2^{n-1}$. As the number different code constructions is equal to the number of possibilities for variables $c_0\in[4],c_1\in[2n],\{v_{i,k'}\in[\delta]\}_{1\le i \le k' \le k}$ and $\{b_{i,k'}\in[2]\}_{1\le i \le k' \le k}$, we conclude that there is a code $\C_{k}^n$ with redundancy at most
	\begin{align*}
	n&-\log|\C_{k}^n| \leq n-\log\left(\frac{2^{n-1}}{8n 2^{{k+1 \choose 2}}\delta^{{k+1 \choose 2}}}\right)\\&=\log n + {k+1 \choose 2}\log \delta  +{k+1 \choose 2} + 4\\
	&\le\log n \!+\! {k\!+\!1 \choose 2}\log\log n\!+\! {k\!+\!1 \choose 2}(2k\!+\!2\!+\!\log k) \!+\!4\\
	&=\log n + {k+1 \choose 2}\log\log n + c_k.
	\end{align*}
	Let $\x\in\C_{k}^n$ and $\y\in B_{k'}(x)$ for some $k'\in[k]$. By Lemma~\ref{lem::locating burst}, we can locate in $O(n)$ time the position of the burst of deletions occurred up to the range of length $\delta$ consecutive positions. Additionally, we observe that for all $i$ with $1\leq i \leq k'$, $\y_{[i,n-k']_{k'}}$ is obtained from $\x_{[i,n]_{k'}}$ by deleting exactly one bit. Using the positional knowledge and Lemma~\ref{lem::shifted vt code}, we can reconstruct every $\x_{[i,n]_{k'}}$ in $O(n)$ time. Therefore, we can correct one burst of deletions and find $\x$ in $O(n)$ time. This completes the proof.
\end{proof} \vspace{-.23cm}
\begin{example}
	Exemplary code for $\delta=10$. $\mathcal{C}_2^{14} =$ \mbox{$ \{(0101\underline{0011}0\underline{0011}0),	(100\underline{0011}111\underline{0011}), (1001\underline{0011}1\underline{0011}1)\} $}. Note that there are larger codes of length $14$, however, here we present one of cardinality $3$ for reasons of  clarity.
\end{example}\vspace{-.23cm}
\section{Efficient encoding and decoding}\label{s::efficient construction}
Note that while Construction \ref{con:final} provides an efficiently decodable code that is able to correct a burst length at most $k$, it is not clear, how to efficiently encode in this code. In this section, we give a brief outline, how the previously introduced construction can be used to obtain an encoding algorithm that efficiently maps a string $\x\in\{0,1\}^n$ into a code that is able to correct a burst of length at most $k$. Note that here we only give the idea of the construction for brevity. The encoding procedure from an information word $\u \in \{0,1\}^d$ to a codeword $\x \in \{0,1\}^n$ works as follows. Define functions $\mathsf{E}(\u)$, and $\mathsf{S}_1(\u)$, where $\mathsf{E}(\u)$ is a function that maps a string to a $(\p,\delta)$-dense string, where $\delta = k2^{2k+1} \lceil\log d\rceil$ and $\mathsf{S}_1(\u)$ is a binary representation of the values $c_0$, $c_1$, $\{v_{i,k'}\}_{1\le i \le k' \le k}$ and $\{b_{i,k'}\}_{1\le i \le k' \le k}$ obtained by computing the syndromes from Construction \ref{con:final} of $\mathsf{E}(\u)$. Note that it is possible to find such functions that are efficiently computable, where $\mathsf{E}(\u)$ has redundancy that only depends on $k$ and $\mathsf{S}_1(\u)$ has redundancy similar to that derived in Theorem \ref{th::one-burst-deletion code}. With these functions, we define the following encoding map\vspace{-.1cm}
$$ \mathsf{Enc}(\u) =(\mathsf{E}(\u), \mathsf{E}(\mathsf{S}_1(\u)), R_{k+1}(\mathsf{S}_1(\mathsf{S}_1(\u)))), $$
where $R_{k+1}(\bullet)$ is the $(k+1)$-fold repetition code. It is straight-forward to derive that this encoding also introduces a redundancy of $\log n +\tilde{c}_k\log \log n$ for some constant $\tilde{c}_k$. The correctness of this construction should be understood by a decoding procedure that decodes from right to left. Let $\y \in B_{\leq k}(\mathsf{Enc}(\u))$ be received. First, it is possible to reconstruct $\mathsf{S}_1(\mathsf{S}_1(\u))$ by using the repetition code. Using $\mathsf{S}_1(\mathsf{S}_1(\u))$ and a possible erroneous version of $\mathsf{E}(\mathsf{S}_1(\u))$, we can reconstruct $\mathsf{S}_1(\u)$ and finally $\u$, the original information. 

	\bibliographystyle{IEEEtran}
	\bibliography{burst_deletion}

\end{document}